\newtheorem{theorem}{Theorem}
\newtheorem{lemma}{Lemma}
\newtheorem{remark}{Remark}
\newtheorem{definition}{Definition}
\numberwithin{equation}{section} 
\numberwithin{lemma}{section} 
\numberwithin{theorem}{section} 
\numberwithin{definition}{section} 
\numberwithin{corollary}{section} 
\newcommand{\noisyVectorOverallProb}{1 - 2e^{-\frac{n\epsilon^2}{12}} - 2e^{-t^2/2} - \bracket{\frac{1}{2n}}^{C_1 -1}}
\newcommand{\noisyVectorError}{\frac{\epsilon \sigma^2 n + 2 \sigma t}{1 + \sigma^2 n}}
\newcommand{\epsilonS}{\epsilon_S}
\newcommand{\nuBar}{\bar{\nu}}
\newcommand{\logFrac}[2]{\log \bracket{ \frac{#1}{#2}}}
\newcommand{\mathO}[1]{\mathcal{O}\left( #1\right)}
\newcommand{\N}{\mathbb{N}}
\newcommand{\R}{\mathbb{R}}
\renewcommand{\P}{\mathbb{P}}
\newcommand{\normTwo}[1]{\left\lVert#1\right\rVert_2}
\newcommand{\innerProduct}[2]{\langle #1, #2 \rangle}
\newcommand{\epS}{\epsilon_S}
\renewcommand{\abs}[1]{\lvert #1\rvert}
\newcommand{\nuOneHashingSCTwo}{C_2 \sqrt{\epS} \min \left\{ \frac{\log(E/\epS)}{\log(1/\deltaS)}, \sqrt{\frac{\log(E)}{\log(1/\deltaS)}} \right\}}
\newcommand{\probability}[1]{\P \left( #1 \right)}
\newcommand{\bracket}[1]{\left( #1\right)}
\newcommand{\normInf}[1]{\|#1\|_{\infty}}
\newcommand{\deltaS}{\delta_S}
\title{Johnson-Lindenstrauss embedding for noisy vectors -- taking advantage of the noise}
\author{Zhen Shao 
}
\date{1 September 2022}
\begin{document}

\maketitle

\begin{abstract}
    This paper investigates theoretical properties of subsampling and hashing as tools for approximate Euclidean norm-preserving embeddings for vectors with (unknown) additive Gaussian noises. Such embeddings are sometimes called Johnson-lindenstrauss embeddings due to their celebrated lemma. Previous work shows that as sparse embeddings, the success of subsampling and hashing closely depends on the $l_\infty$ to $l_2$ ratios of the vector to be mapped. This paper shows that the presence of noise removes such constrain in high-dimensions, in other words, sparse embeddings such as subsampling and hashing with comparable embedding dimensions to dense embeddings have similar approximate norm-preserving dimensionality-reduction properties. The key is that the noise should be treated as an information to be exploited, not simply something to be removed. Theoretical bounds for subsampling and hashing to recover the approximate norm of a high dimension vector in the presence of noise are derived, with numerical illustrations showing better performances are achieved in the presence of noise. 
\end{abstract}

\section{Introduction}

Dimensionality reduction that approximately preserves Euclidean distances is a key tool used in many algorithms \cite{MR1715608, BCGNPaper}. A fundamental result is the Johnson-Lindenstrauss Lemma.

\begin{lemma}[JL Lemma \cite{Johnson:1984aa, MR1943859}]
\label{def::JL_lemma}
Given a fixed, finite set $Y\subseteq \R^{n}$, $\epsilonS, \deltaS > 0$, let $S \in\R^{m\times n}$ 
have entries independently distributed as the normal $N(0, n^{-1})$, with 
$m = \mathO{\epsilonS^{-2}\log(\frac{|Y|}{\deltaS})}$ and where $|Y|$ refers to the cardinality of the set $Y$. Then we have, with probability at least $1-\deltaS$, that
\begin{equation}
(1-\epsilonS)\|y\|_2^2 \leq \|Sy\|_2^2 \leq (1+\epsilonS)\|y\|_2^2 \quad \text{for all}\,\, y \in Y.
\end{equation}
\end{lemma}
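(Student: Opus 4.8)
The plan is to run the classical Johnson--Lindenstrauss argument: a union bound reduces the uniform-over-$Y$ statement to a single-vector tail bound, and that single-vector bound holds because $\normTwo{Sy}^2$ is, up to scaling, a chi-squared random variable, for which sharp Chernoff-type concentration is available.

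First I would reduce to unit vectors. The two-sided inequality is homogeneous of degree $2$ in $y$ and is trivial when $y=0$, so it suffices to bound, for each fixed unit vector $u\in\R^n$, the probability that $\normTwo{Su}^2$ deviates from $1$ by more than $\epsilonS$, and then to union bound over the (at most $\abs{Y}$) normalised vectors $y/\normTwo{y}$, $y\in Y$. Fixing such a $u$, each coordinate $(Su)_i=\sum_j S_{ij}u_j$ is a linear combination of independent Gaussians, hence Gaussian with mean $0$ and, after the appropriate normalisation of $S$, variance $1/m$ (so that $\E\normTwo{Su}^2=\normTwo{u}^2=1$); moreover the rows of $S$ are independent, so $(Su)_1,\dots,(Su)_m$ are independent and $m\normTwo{Su}^2$ has a $\chi^2_m$ distribution.

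The technical core is then a two-sided concentration inequality for $Z\sim\chi^2_m$ about its mean $m$. Using the closed form $\E e^{\theta(Z-m)}=(1-2\theta)^{-m/2}e^{-\theta m}$ for $\theta<1/2$, a Chernoff bound with the optimal $\theta$ for each tail --- equivalently, the Laurent--Massart bounds --- gives, for $0<\epsilonS<1$,
\[
\P\bracket{\abs{\tfrac1m Z-1}>\epsilonS}\;\leq\;2\,e^{-c\,m\epsilonS^2}
\]
with an explicit absolute constant $c$ (one may take $c=1/8$). Taking $m\geq \tfrac1c\,\epsilonS^{-2}\log\bracket{\tfrac{2\abs{Y}}{\deltaS}}$ makes the right-hand side at most $\deltaS/\abs{Y}$, which is exactly the asserted $m=\mathO{\epsilonS^{-2}\log(\abs{Y}/\deltaS)}$; a union bound over the $\abs{Y}$ vectors then gives the claim with probability at least $1-\deltaS$.

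I do not expect a genuine obstacle, as this is the classical estimate; the only delicate point is the chi-squared tail bound, where the two tails are asymmetric (the lower tail $\tfrac1m Z<1-\epsilonS$ is lighter, the upper tail governs the constant $c$) and the exponent must be kept explicit so that the $\log(\abs{Y}/\deltaS)$ dependence of the embedding dimension is transparent. A secondary point to state carefully is the independence of the $(Su)_i$ across rows of $S$, which is what allows one to treat $\normTwo{Su}^2$ as a normalised sum of $m$ i.i.d.\ squared standard Gaussians rather than a general Gaussian quadratic form.
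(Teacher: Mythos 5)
The paper does not prove this lemma; it is quoted from the literature (Johnson--Lindenstrauss, and the Dasgupta--Gupta style proof in the cited references), so there is no in-paper argument to compare against. Your proof is the standard one and is correct: homogeneity reduces the claim to unit vectors, independence of the rows makes $m\normTwo{Su}^2$ a $\chi^2_m$ variable, the Chernoff/Laurent--Massart bound gives the two-sided tail $2e^{-cm\epsilonS^2}$ (your observations that the upper tail is the heavier one and that $c=1/8$ suffices for $\epsilonS\in(0,1)$ are both right), and the union bound over the $\abs{Y}$ normalised vectors yields the stated $m=\mathO{\epsilonS^{-2}\log(\abs{Y}/\deltaS)}$. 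One point worth making explicit rather than passing over with ``after the appropriate normalisation'': the lemma as printed takes the entries of $S$ to be $N(0,n^{-1})$, under which $\E\normTwo{Su}^2=(m/n)\normTwo{u}^2\neq\normTwo{u}^2$ in general, so the displayed inequality cannot hold as stated; the correct variance is $m^{-1}$ (which is what your chi-squared computation actually uses). Your proof therefore establishes the intended, correctly normalised statement, and it would be worth flagging the $n^{-1}$ in the lemma as a typo rather than silently switching normalisations mid-proof.
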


The target dimension $m$ is known to be optimal \cite{MR3734267, MR3073520}. 

However, the density of Gaussian matrices results in relatively large computational costs, both in terms of evaluating the matrix-vector product, or in terms of evaluating the vector itself, when the entries of the vector need to be computed from expensive computational procedures.  To alleviate these deficiencies, subsampling techniques \cite{10.5555/1109557.1109682} and 
 sparse random matrix ensembles \cite{10.1145/3019134} have been proposed as alternatives for computational efficiency; the former consists of applying matrices with $1$ non-zeros per row at a column uniformly chosen at random, and the latter consists of $s$-hashing matrices with $s$ (fixed) non-zeros per column, at random rows, $s \geq 1$. Nevertheless, without data-specific assumptions, subsampling fails to give an approximate norm preserving embedding in general, and $s$-hashing can only succeed with a relatively large $s$, as shown by the lower bound in \cite{10.1145/2488608.2488622, Nelson:2014uu}.
 
 Real data, however, tend to be corrupted by random noises. Therefore the pathological example mentioned in \cite{2021arXiv210511815C}, and the lower bound in \cite{Nelson:2014uu} may not apply when we consider an alternative data model, namely, each data input $x$ comes with Gaussian noise. This framework was similar to the smoothed analysis of algorithmic complexities for the simplex algorithm \cite{MR2120328}. Specifically, we consider our inputs as
\begin{equation}
    \tilde{x}  = x + \sigma \normTwo{x} r, \label{eqn:noisyX}
\end{equation}
where $x,r \in \R^n$, $\sigma > 0$, and $r$ have i.i.d. entries distributed as $N(0,1)$.

We show, under this noisy data model, that we can not only embed noisy vector $\tilde{x}$ using subsampling or hashing matrices with significantly smaller embedding dimensions while approximately preserving $\normTwo{\tilde{x}}$ , but since $\normTwo{\tilde{x}}$ is intimately connected to $\normTwo{x}$, we can simulataneously approximately recover $\normTwo{x}$, thus removing the noise. The combination of these two insights results in the surprising fact that when noise is present, approximate norm-preserving embedding becomes more efficient, due to the fact that the noise is correlated with our target -- the norm of the vector, and therefore we take advantage of the presence of the noise. 

A major innovation in this paper consists of proposing to consider embedding properties in a noisy input regime. We derive the following main result, which shows that the very sparse $1$-hashing matrices have a comparable embedding property as Gaussian matrices in a high-dimensional, noisy input setting.
The following expression will be needed in our results,
\begin{equation}
\bar{\nu}(\epsilonS,\deltaS):= \nuOneHashingSCTwo,
      \label{L5:nu}
\end{equation}
where $\epS, \deltaS \in (0,1)$ and $E, C_2>0$.

\begin{theorem}\label{thm:hashing}
Let $n \in \N^+$, $\epsilon, \epS, \deltaS \in (0,1)$. $t, \sigma, C_1 > 0$ with $2 \sigma t < 1$.  Suppose that $n \geq n_0$, where
\begin{equation}
    n_0 = \frac
    { \bracket{\frac{\nu(x) + \sigma \sqrt{2C_1 \log(2n)}}{\nuBar}}^2 - 1 + 2\sigma t}
    {\sigma^2 \bracket{1 - \epsilon}}
    \label{eqn:n0}
\end{equation}
Then given a fixed vector $x \in \R^n$, let $S \in \R^{m\times n}$ be a $1$-hashing matrix, with $m = \mathO{\epS^{-2} \logFrac{1}{\deltaS}}$, with probability at least $\bracket{1 - 2e^{-\frac{n\epsilon^2}{12}} - 2e^{-t^2/2} - \bracket{\frac{1}{2n}}^{C_1 -1}}\bracket{1 - \deltaS}$, we have that
\begin{equation}
(1-\epS)\bracket{1 - \noisyVectorError}\|x\|_2^2 \leq \frac{\|S\tilde{x}\|_2^2}{1 + \sigma^2 n} \leq (1+\epS)\bracket{1 + \noisyVectorError}\|x\|_2^2,
\end{equation}
where $\tilde{x}$ is defined in \eqref{eqn:noisyX} and $r$ in \eqref{eqn:noisyX} is independent of $S$.

\end{theorem}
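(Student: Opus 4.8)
\emph{Proof plan.} The plan is to decouple the randomness of $\tilde x$ (equivalently, of the Gaussian vector $r$) from the randomness of the hashing matrix $S$. First I would condition on $r$ and use elementary Gaussian concentration to control both $\|\tilde x\|_2^2$ and the ratio $\nu(\tilde x):=\|\tilde x\|_\infty/\|\tilde x\|_2$; the hypothesis $n\ge n_0$ is exactly what forces $\nu(\tilde x)$ to be small. Then, with $\tilde x$ deterministic given $r$, I would invoke the norm-preservation property of $1$-hashing matrices in terms of the $\ell_\infty/\ell_2$ ratio — precisely the result motivating the threshold $\bar\nu$ in \eqref{L5:nu}: a $1$-hashing matrix with $m=\mathcal{O}(\epS^{-2}\logOneOverDeltaS)$ rows preserves the squared norm of a fixed vector $y$ up to a factor $1\pm\epS$ with probability at least $1-\deltaS$ whenever $\|y\|_\infty/\|y\|_2\le\bar\nu(\epS,\deltaS)$ — and finally combine the two probabilities using that $S$ and $r$ are independent.

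\textbf{Step 1 (good event for $r$, and consequences).} Let $G$ be the intersection of three events: $\set{\|r\|_2^2\in[(1-\epsilon)n,(1+\epsilon)n]}$, whose complement has probability at most $2e^{-n\epsilon^2/12}$ by a $\chi^2$ tail bound; $\set{|\langle x,r\rangle|\le t\|x\|_2}$, whose complement has probability at most $2e^{-t^2/2}$ since $\langle x/\|x\|_2,r\rangle\sim N(0,1)$; and $\set{\|r\|_\infty\le\sqrt{2C_1\log(2n)}}$, whose complement has probability at most $(2n)^{1-C_1}=\bracket{\tfrac{1}{2n}}^{C_1-1}$ by a union bound over the $n$ coordinates together with $\P(|Z|>u)\le e^{-u^2/2}$. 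Thus $\P(G)\ge\noisyVectorOverallProb$. On $G$, expand $\|\tilde x\|_2^2=\|x\|_2^2+2\sigma\|x\|_2\langle x,r\rangle+\sigma^2\|x\|_2^2\|r\|_2^2$ and use the first two events to get
\[
\|x\|_2^2\bracket{1-2\sigma t+\sigma^2 n(1-\epsilon)}\;\le\;\|\tilde x\|_2^2\;\le\;\|x\|_2^2\bracket{1+2\sigma t+\sigma^2 n(1+\epsilon)};
\]
dividing by $1+\sigma^2 n$ yields $(1-E)\|x\|_2^2\le\|\tilde x\|_2^2/(1+\sigma^2 n)\le(1+E)\|x\|_2^2$ with $E=\noisyVectorError$ (and $1-2\sigma t>0$ by hypothesis, so $\tilde x\ne0$). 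For the ratio, the triangle inequality and the third event give $\|\tilde x\|_\infty\le\|x\|_\infty+\sigma\|x\|_2\sqrt{2C_1\log(2n)}$, so
\[
\nu(\tilde x)\;\le\;\frac{\nu(x)+\sigma\sqrt{2C_1\log(2n)}}{\sqrt{1-2\sigma t+\sigma^2 n(1-\epsilon)}},
\]
and squaring and rearranging shows the right-hand side is $\le\bar\nu(\epS,\deltaS)$ if and only if $n\ge n_0$ with $n_0$ exactly as in \eqref{eqn:n0}; this is the calculation that produces the stated threshold.

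\textbf{Step 2 (embed and combine).} Fix any $r\in G$. By Step 1, $\tilde x$ is then a fixed vector with $\nu(\tilde x)\le\bar\nu(\epS,\deltaS)$, and since $S$ is a $1$-hashing matrix with $m=\mathcal{O}(\epS^{-2}\logOneOverDeltaS)$ independent of $r$, the hashing guarantee gives, conditionally on $r$, the event $H=\set{(1-\epS)\|\tilde x\|_2^2\le\|S\tilde x\|_2^2\le(1+\epS)\|\tilde x\|_2^2}$ with $\P(H\mid r)\ge1-\deltaS$. On $G\cap H$, chaining $H$ with the two-sided bound on $\|\tilde x\|_2^2/(1+\sigma^2 n)$ from Step 1 gives exactly $(1-\epS)(1-E)\|x\|_2^2\le\|S\tilde x\|_2^2/(1+\sigma^2 n)\le(1+\epS)(1+E)\|x\|_2^2$, i.e. the theorem's conclusion. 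Finally $\P(G\cap H)=\expectation{\mathbf{1}_G\,\P(H\mid r)}\ge(1-\deltaS)\P(G)\ge\bracket{\noisyVectorOverallProb}\bracket{1-\deltaS}$, the claimed probability.

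\textbf{Main obstacle.} No single estimate is hard; the delicate part is the bookkeeping. One must route the one event $G$ through \emph{both} the $\ell_2$ bound and the $\ell_\infty$ bound, verify that $n_0$ is precisely the breakpoint of $\nu(\tilde x)\le\bar\nu(\epS,\deltaS)$ (not merely a sufficient condition), and correctly multiply a \emph{conditional} success probability over $S$ by an \emph{unconditional} one over $r$, which is legitimate only because $S$ and $r$ are independent. Everything else reduces to Gaussian tail bounds and algebra.
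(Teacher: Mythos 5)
Your proposal is correct and follows essentially the same route as the paper: establish a good event for the Gaussian noise $r$ (combining the $\chi^2$, inner-product, and $\ell_\infty$ tail bounds of Lemmas \ref{lem:maxGaussBound}--\ref{lem:lInftoltwo}) on which $\nu(\tilde x)\le\bar\nu(\epS,\deltaS)$ precisely when $n\ge n_0$, then invoke the Freksen--Kamma--Larsen $1$-hashing guarantee (Lemma \ref{Freksen}) conditionally on $r$ and multiply the probabilities by independence. Your explicit computation $\P(G\cap H)=\expectation{\mathbf{1}_G\,\P(H\mid r)}$ is in fact a slightly more careful rendering of the paper's independence step.
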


In addition, we derive results for using subsampling techniques in a noisy data setting. Subsampling a vector (with rescaling) is equivalent to applying a (scaled) sampling matrix.

\begin{definition}\label{def:sampling}
We define $S \in \R^{m \times n}$ to be a scaled sampling matrix if, independently for each $i \in [m]$, we sample $j \in [n]$ uniformly at random and let $S_{ij}=\sqrt{\frac{n}{m}}$. 
\end{definition}

\begin{theorem}\label{thm:sampling}
Let $n \in \N^+$, $\epsilon, \epS, \deltaS \in (0,1)$. $t, \sigma, C_1 > 0$ with $2 \sigma t < 1$. Then given a fixed vector $x \in \R^n$, let $S \in \R^{m\times n}$ be a scaled sampling matrix (\autoref{def:sampling}), with 
\begin{equation}
    m = 2 \epsilonS^{-2} \logFrac{1}{\deltaS} \frac{ \bracket{\nu(x) + \sigma\sqrt{2C_1 \log(2n)}}^2}{ \bracket{\sigma^2 + n^{-1}} \bracket{1 - \noisyVectorError}}, 
\end{equation}
with probability at least $\bracket{\noisyVectorOverallProb}\bracket{1 - \deltaS}$, we have that
\begin{equation}
(1-\epS)\bracket{1 - \noisyVectorError}\|x\|_2^2 \leq \frac{\|S\tilde{x}\|_2^2}{1 + \sigma^2 n} \leq (1+\epS)\bracket{1 + \noisyVectorError}\|x\|_2^2,
\end{equation}
where $\tilde{x}$ is defined in \eqref{eqn:noisyX} and $r$ in \eqref{eqn:noisyX} is independent of $S$. 
\end{theorem}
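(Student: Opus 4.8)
The plan is to mirror the proof of Theorem~\ref{thm:hashing}: condition on the noise vector $r$ in \eqref{eqn:noisyX}, so that $\tilde x = x + \sigma\|x\|_2 r$ becomes a fixed vector while, by the assumed independence of $S$ and $r$, $S$ remains a fresh scaled sampling matrix (Definition~\ref{def:sampling}). Define the ``good noise event'' $\mathcal E := E_1 \cap E_2 \cap E_3$ with $E_1 = \{(1-\epsilon)n \le \|r\|_2^2 \le (1+\epsilon)n\}$, $E_2 = \{|\langle x,r\rangle| \le t\|x\|_2\}$, and $E_3 = \{\|r\|_\infty \le \sqrt{2C_1\log(2n)}\}$. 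By a $\chi^2_n$ tail bound (Laurent--Massart, with the parameter chosen suitably) $\P(E_1^c) \le 2e^{-n\epsilon^2/12}$; since $\langle x, r\rangle/\|x\|_2 \sim N(0,1)$, $\P(E_2^c) \le 2e^{-t^2/2}$; and a Gaussian tail bound together with a union bound over the $n$ coordinates gives $\P(E_3^c) \le (1/(2n))^{C_1-1}$. Hence $\P(\mathcal E) \ge \noisyVectorOverallProb$.

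Next I would record the two consequences of $\mathcal E$. Expanding $\|\tilde x\|_2^2 = \|x\|_2^2 + 2\sigma\|x\|_2\langle x,r\rangle + \sigma^2\|x\|_2^2\|r\|_2^2$ and invoking $E_1,E_2$ gives, after the elementary factorisation $(1+\sigma^2 n) - 2\sigma t - \epsilon\sigma^2 n = (1+\sigma^2 n)(1 - \noisyVectorError)$ (and its $+$ counterpart),
\[
\|x\|_2^2(1+\sigma^2 n)\bracket{1-\noisyVectorError} \;\le\; \|\tilde x\|_2^2 \;\le\; \|x\|_2^2(1+\sigma^2 n)\bracket{1+\noisyVectorError}.
\]
Meanwhile $E_3$ and the triangle inequality give $\|\tilde x\|_\infty \le \|x\|_\infty + \sigma\|x\|_2\sqrt{2C_1\log(2n)} = \|x\|_2\bracket{\nu(x)+\sigma\sqrt{2C_1\log(2n)}}$, where $\nu(x)=\|x\|_\infty/\|x\|_2$. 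Dividing the square of the latter by the lower bound above shows that, on $\mathcal E$,
\[
n\,\frac{\|\tilde x\|_\infty^2}{\|\tilde x\|_2^2} \;\le\; \frac{\bracket{\nu(x)+\sigma\sqrt{2C_1\log(2n)}}^2}{\bracket{\sigma^2+n^{-1}}\bracket{1-\noisyVectorError}} \;=\; \frac{\epsilon_S^2}{2\log(1/\delta_S)}\,m,
\]
with $m$ exactly as in the statement — i.e. the prescribed $m$ equals $2\epsilon_S^{-2}\log(1/\delta_S)$ times a bound for $n\|\tilde x\|_\infty^2/\|\tilde x\|_2^2$ that is valid throughout $\mathcal E$.

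The remaining ingredient is a Johnson--Lindenstrauss bound for scaled sampling applied to the fixed vector $\tilde x$: writing $\|Sv\|_2^2 = \tfrac nm \sum_{i=1}^m v_{j_i}^2$ ($j_i\in[n]$ the column sampled in row $i$) as an average of $m$ i.i.d.\ terms in $[0,\|v\|_\infty^2]$ with mean $\|v\|_2^2/n$, a Bernstein (equivalently multiplicative Chernoff) inequality yields $(1-\epsilon_S)\|v\|_2^2 \le \|Sv\|_2^2 \le (1+\epsilon_S)\|v\|_2^2$ with probability at least $1-\delta_S$ once $m \ge 2\epsilon_S^{-2}\log(1/\delta_S)\,n\|v\|_\infty^2/\|v\|_2^2$. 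Applying this conditionally on $r$ with $v=\tilde x$: for every $r \in \mathcal E$ the bound of the previous paragraph makes the prescribed $m$ admissible, so the sampling-JL conclusion holds with conditional probability $\ge 1-\delta_S$. On $\mathcal E$ intersected with that event, chaining the first display above with the sampling-JL bound gives $(1-\epsilon_S)(1-\noisyVectorError)\|x\|_2^2 \le \|S\tilde x\|_2^2/(1+\sigma^2 n) \le (1+\epsilon_S)(1+\noisyVectorError)\|x\|_2^2$; and since $\mathcal E$ is a function of $r$ only while $S$ is independent of $r$,
\[
\P(\text{conclusion}) \;\ge\; \E_r\!\left[\mathbf 1_{\mathcal E}\,\P_S(\text{sampling-JL event}\mid r)\right] \;\ge\; (1-\delta_S)\,\P(\mathcal E) \;\ge\; \bracket{\noisyVectorOverallProb}\bracket{1-\delta_S}.
\]

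The proof has no conceptual obstacle; the substance is entirely in the noise-concentration estimates and the algebraic identity just used. The one place that needs care is matching constants: choosing the Laurent--Massart parameter so that the $\chi^2$ tail is exactly $2e^{-n\epsilon^2/12}$, and — more importantly — checking that the single \emph{deterministic} $m$ in the statement simultaneously dominates the \emph{random} requirement $2\epsilon_S^{-2}\log(1/\delta_S)\,n\|\tilde x\|_\infty^2/\|\tilde x\|_2^2$ for every outcome in $\mathcal E$. This is precisely what the numerator $\bracket{\nu(x)+\sigma\sqrt{2C_1\log(2n)}}^2$ and denominator $\bracket{\sigma^2+n^{-1}}\bracket{1-\noisyVectorError}$ in the definition of $m$ are engineered to guarantee; note $1-\noisyVectorError>0$ because $2\sigma t<1$ and $\epsilon<1$, so $m$ is well defined.
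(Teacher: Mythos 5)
Your proposal is correct and follows essentially the same route as the paper: bound $\nu(\tilde x)$ and $\|\tilde x\|_2^2/\|x\|_2^2$ on the high-probability noise event (the paper's \autoref{lem:lInftoltwo}), check that the prescribed $m$ dominates $2\epsilon_S^{-2}\log(1/\delta_S)\,n\nu(\tilde x)^2$ there, apply the non-uniformity-dependent sampling JL bound conditionally, and multiply probabilities by independence of $r$ and $S$. The only difference is that you re-derive the sampling JL step via a Chernoff/Bernstein argument where the paper simply cites \autoref{lem:BCGNSampling}, and you are somewhat more explicit about the conditioning-on-$r$ independence argument; both are harmless refinements rather than a different proof.
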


The proofs of the above results are contained in Section 2.

\subsection{Related work}

The key concept in the Johnson-Lindenstrauss Lemma is the Johnson-Lindenstrauss (JL) embedding.

\begin{definition}[JL embedding \cite{10.1561/0400000060}]
\label{def::JL_embedding}
An $\epsilon$-JL embedding for a set $Y\subseteq \R^{n}$ 
is a matrix $S \in\R^{m\times n}$ such that
\begin{equation}\label{JL}
(1-\epsilon)\|y\|_2^2 \leq \|Sy\|_2^2 \leq (1+\epsilon)\|y\|_2^2 \quad \text{for all}\,\, y \in Y.
\end{equation}
\end{definition}

Oblivious embeddings are matrix distributions such that given a(ny)  subset/column subspace of vectors in $\R^n$, a random matrix drawn 
 from such a distribution is an embedding for these vectors with high probability. We let $1-\delta\in [0,1]$ denote a(ny) success probability of an embedding.

\begin{definition}[Oblivious embedding \cite{10.1561/0400000060,10.1109/FOCS.2006.37}] \label{Oblivious_embedding}
A distribution $\cal{S}$ on $S \in \R^{m \times n}$  is an $(\epsilon,\delta)$-oblivious embedding if given a fixed/arbitrary set of vectors, we have that, with probability at least $1-\delta$, a matrix $S$ from the distribution is an $\epsilon$-embedding for these vectors.
\end{definition}

Since the celebrated Johnson and Lindenstrauss Lemma, a line of work starts on improving the computational efficiency of oblivious $JL$-embeddings. Using sparse Gaussian matrices with fixed density was proposed in \cite{MR2005771}, using a fast Fourier transform-based embedding was proposed in \cite{10.1145/1132516.1132597}. Using $1$-hashing matrices was proposed as a feature hashing technique in \cite{10.1145/1553374.1553516}, which is a special case of $s$-hashing matrices, defined as

\begin{definition}\cite{10.1145/3019134} \label{def::sampling_and_hashing}
$S \in \R^{m \times n}$ is a $s$-hashing matrix if independently for each $j \in [n]$, we sample without replacement $i_1, i_2, \dots, i_s \in [m]$ uniformly at random and let $S_{i_k j} = \pm 1/\sqrt{s}$ for $k = 1, 2, \dots, s$.
\end{definition}

It follows that when $s=1$, 	$S \in \R^{m \times n}$ is a {\it $1$-hashing matrix} if independently for each $j \in [n]$, we sample $i \in [m]$ uniformly at random and let $S_{ij} = \pm1$ \footnote{The random signs are so as  to ensure that in expectation, the matrix $S$ preserves the norm of a vector $x \in \R^n$.}. The analysis for a general $s$ was done by \cite{MR3167920, MR3773205}, and a lower bound is found in \cite{10.1145/2488608.2488622}. 

Data-dependent analysis for $1$-hashing matrices, specifically, analysis for embedding vectors with small $l_\infty$ to $l_2$ norm ratios are found in \cite{10.5555/3295222.3295407, 10.5555/3327345.3327444}. The $l_\infty$ to $l_2$ norm ratio quantifies the non-uniformity of entries.

\begin{definition}[Non-uniformity of a vector]\label{def:non-uniform-vector}
Given $x \in \R^{n}$, the non-uniformity of $x$, $\nu(x)$, is defined as
\begin{align}
\nu(x) = \frac{\|x\|_{\infty}}{\|x\|_2}. 
\end{align}
\end{definition}

\cite{10.5555/3327345.3327444} shows that $1$-hashing matrices is an oblivious $JL$-embedding for vectors with sufficiently small $\nu$. Data-dependent analysis for $s$-hashing matrices is recently done in \cite{NIPS2019_9656}.

The smoothed complexity idea was proposed by \cite{MR2120328} in the context of the simplex algorithm, a long-standing puzzle for which is its exponential worst-case complexity and its excellent practical performance. The authors in \cite{MR2120328} defines 
the smoothed complexity of the simplex algorithm as
\begin{equation}
    C_{\text{smooth}}(n, \sigma) = \max_x \mathbb{E}_{r\in X_n} T(x + \sigma \normTwo{x} r),
\end{equation}
where $X_n$ is a chosen distribution for the input to the simplex algorithm $x$, and $\sigma$ is a typically small constant modeling the noise in the actual input. Smoothed analysis has been applied extensively in algorithmic analysis and computer science, see \cite{arthur2011smoothed} for application to the $k$-means method, \cite{bhaskara2014smoothed} for application to tensor decompositions, and \cite{etscheid2017smoothed} for application to the local search method.

In this work, we apply the noisy model of smoothed analysis to JL-embeddings, and show $1$-hashing matrices and scaled sampling matrices have comparable dimensionality reduction properties to the Gaussian matrices in this setting, while being more computationally efficient.

\section{Proof of the main result}

We first state some familiar results in high-dimensional probability. For more details, see \cite{MR3837109}.

\paragraph{The maximum of a sequence of independent Gaussian}

The first lemma we will use concerns the maximum of a sequence of independent standard normal random variables.

\begin{lemma}\label{lem:maxGaussBound}
Let $Z_n = \max\{|X_1|, |X_2|, \dots, |X_n|\}$ where $X_i$ are $N(0,1)$, then we have, for $t>0$, 

\begin{equation}
    \probability{Z_n \geq t} \leq 2n e^{-t^2/2}. \label{eqn:maxGauss}
\end{equation}

So for any constant $C_1$, by choosing $t = \sqrt{2C_1\sigma_1 \log(2n)}$, we have $\probability{Z_n \geq t} \leq \bracket{\frac{1}{2n}}^{C_1-1}$.
\end{lemma}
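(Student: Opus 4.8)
The plan is to derive \eqref{eqn:maxGauss} from a union bound over the $n$ coordinates together with the standard sub-Gaussian tail estimate for a single standard normal, and then to obtain the displayed corollary by plugging in the stated value of $t$.

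First I would rewrite the event in terms of the individual variables: since $Z_n \geq t$ holds precisely when $|X_i| \geq t$ for at least one $i$, we have $\{Z_n \geq t\} = \bigcup_{i=1}^{n}\{|X_i|\geq t\}$, and hence by subadditivity $\probability{Z_n \geq t} \leq \sum_{i=1}^{n}\probability{|X_i|\geq t} = n\,\probability{|X_1| \geq t}$, using that the $X_i$ are identically distributed.

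Next I would bound the single tail. By symmetry of the $N(0,1)$ law, $\probability{|X_1|\geq t} = 2\,\probability{X_1 \geq t}$, and a Chernoff argument controls the one-sided tail: for every $\lambda > 0$, Markov's inequality applied to $e^{\lambda X_1}$ gives $\probability{X_1 \geq t} \leq e^{-\lambda t}\,\expectation{e^{\lambda X_1}} = e^{-\lambda t + \lambda^2/2}$, and choosing $\lambda = t$ yields $\probability{X_1 \geq t} \leq e^{-t^2/2}$. Therefore $\probability{|X_1|\geq t}\leq 2e^{-t^2/2}$, and combining with the union bound above gives $\probability{Z_n \geq t}\leq 2n\,e^{-t^2/2}$, which is exactly \eqref{eqn:maxGauss}.

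Finally, for the corollary I would substitute $t = \sqrt{2C_1\log(2n)}$ (so that $t^2/2 = C_1\log(2n)$) into this bound, obtaining $\probability{Z_n \geq t}\leq 2n\,e^{-C_1\log(2n)} = 2n\,(2n)^{-C_1} = (2n)^{1-C_1} = \bracket{\frac{1}{2n}}^{C_1-1}$. There is no real obstacle in this argument; the only points requiring a little care are invoking the Gaussian moment generating function identity $\expectation{e^{\lambda X_1}} = e^{\lambda^2/2}$ and the symmetry correctly, and one could even sharpen the constant (the factor $2$ can be dropped since $\probability{|X_1|\geq t}\leq e^{-t^2/2}$ already holds), though the stated form is all that is needed downstream.
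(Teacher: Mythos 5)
Your proof is correct and follows the same route as the paper: the one-sided Gaussian tail bound, symmetrization, and a union bound over the $n$ coordinates, followed by substitution of $t$ (note the paper's proof simply cites the Gaussian tail bound as well known rather than rederiving it via Chernoff, and the $\sigma_1$ in the stated choice of $t$ appears to be a typo that you correctly ignored).
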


\begin{proof}
Let $X$ be $N(0,1)$, then for $t > 0$, the well-known Gaussian tail bound gives 
\begin{equation*}
    \probability{X \geq t} \leq e^{-t^2/2}.
\end{equation*}
Therefore by symmetry, we have that
\begin{equation}
    \probability{\abs{X} \geq t} \leq 2e^{-t^2/2}. 
\end{equation}
Taking a union bound gives \eqref{eqn:maxGauss}.
\end{proof}

\paragraph{Concentration bound for the 2-norm of a Gaussian vector}
The next standard lemma bounds the norm of a high-dimensional Gaussian vector.

\begin{lemma} \label{lem:twoNormGaussBounnd}
Let $r_n \in \R^n$ with each entry i.i.d. $N(0,1)$, using a Chernoff-type bound, we have (see \cite{ZhenThesis}, (4.4.13), or \cite{MR3837109}), 
\begin{equation}
    \probability{ n (1-\epsilon) \leq \normTwo{r_n}^2 \leq 
    n (1+\epsilon)} \geq 1 - 2e^{-\frac{n\epsilon^2}{12}},
\end{equation}
for any $\epsilon \in (0,1)$.\footnote{The constant $12$ is not optimal, but a different constant here does not affect the conclusions of this paper.}
\end{lemma}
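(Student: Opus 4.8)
The plan is to recognize that $\normTwo{r_n}^2 = \sum_{i=1}^n X_i^2$ with the $X_i$ i.i.d. $N(0,1)$ is a chi-squared random variable with $n$ degrees of freedom, and to bound each of its two tails separately by a Chernoff (exponential-moment) argument before combining them with a union bound. The single ingredient needed is the moment generating function of a squared standard Gaussian: for $\lambda < 1/2$ one has $\mathbb{E}\!\left[e^{\lambda X_1^2}\right] = (1-2\lambda)^{-1/2}$, and hence by independence $\mathbb{E}\!\left[e^{\lambda \normTwo{r_n}^2}\right] = (1-2\lambda)^{-n/2}$, with the analogous identity $\mathbb{E}\!\left[e^{-\lambda \normTwo{r_n}^2}\right] = (1+2\lambda)^{-n/2}$ valid for all $\lambda > 0$.

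For the upper tail, Markov's inequality applied to $e^{\lambda \normTwo{r_n}^2}$ gives, for any $\lambda \in (0, 1/2)$,
\[
\mathbb{P}\!\left(\normTwo{r_n}^2 \geq n(1+\epsilon)\right) \leq e^{-\lambda n(1+\epsilon)}(1-2\lambda)^{-n/2},
\]
and choosing the minimizer $\lambda = \tfrac{1}{2}\cdot\tfrac{\epsilon}{1+\epsilon}$ yields the bound $\exp\!\big(-\tfrac{n}{2}\left(\epsilon - \log(1+\epsilon)\right)\big)$. Symmetrically, for the lower tail and any $\lambda > 0$,
\[
\mathbb{P}\!\left(\normTwo{r_n}^2 \leq n(1-\epsilon)\right) \leq e^{\lambda n(1-\epsilon)}(1+2\lambda)^{-n/2},
\]
which after optimizing over $\lambda$ becomes $\exp\!\big(-\tfrac{n}{2}\left(-\epsilon - \log(1-\epsilon)\right)\big)$.

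The only remaining step is the elementary inequality that both $\epsilon - \log(1+\epsilon)$ and $-\epsilon - \log(1-\epsilon)$ are bounded below by $\epsilon^2/6$ for $\epsilon \in (0,1)$; this follows from Taylor expansion and monotonicity of the functions $\epsilon \mapsto \mp\epsilon - \log(1\mp\epsilon) - \epsilon^2/6$ on $(0,1)$. With this, each tail probability is at most $e^{-n\epsilon^2/12}$, and a union bound over the two tail events gives the claimed lower bound $1 - 2e^{-n\epsilon^2/12}$ on the probability of the sandwiched event. There is essentially no obstacle here beyond bookkeeping: the two tails are asymmetric, so one must keep the constants consistent, and it is precisely the slack in the crude bound $\epsilon^2/6$ (a sharper treatment would give a better exponent on each side) that accounts for the non-optimal constant $12$ already flagged in the footnote — which is why the statement is simply quoted from standard references.
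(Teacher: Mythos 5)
Your proof is correct, and it fills in exactly the ``Chernoff-type bound'' that the paper merely cites from \cite{ZhenThesis} and \cite{MR3837109} without reproducing: the chi-squared moment generating function, the optimized exponents $\tfrac{n}{2}(\epsilon-\log(1+\epsilon))$ and $\tfrac{n}{2}(-\epsilon-\log(1-\epsilon))$, the elementary lower bound $\epsilon^2/6$ on both (whose derivative check indeed works on $(0,1)$), and the final union bound are all sound. Since the paper offers no proof of its own, there is nothing to contrast; your argument is a standard and complete derivation of the quoted inequality.
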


\subsection{The relationship between the norm of a noisy vector and the norm of the original vector}

In this subsection, we show that the norm of a noisy vector enables us to approximately compute the norm of the origianl vector. 

\begin{lemma}\label{lem:noisyvecNorm}
Let $\epsilon \in (0,1), t>0$. Let $x \in \R^n$, $\tilde{x}$ be defined in \eqref{eqn:noisyX}. Suppose that $2\sigma t < 1$. Then with probability at least $1 - 2e^{-n\epsilon^2/12} - 2e^{-t^2/2}$, we have that
\begin{equation}
    (1 + \sigma^2 n) \bracket{ 1 - \frac{\epsilon \sigma^2 n + 2 \sigma t}{1 + \sigma^2 n}} \normTwo{x}^2
    \leq 
    \normTwo{\tilde{x}}^2
    \leq 
    (1 + \sigma^2 n) \bracket{ 1 + \frac{\epsilon \sigma^2 n + 2 \sigma t}{1 + \sigma^2 n}} \normTwo{x}^2. \label{eq:tmp1}
\end{equation}
\end{lemma}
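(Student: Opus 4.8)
The plan is to expand $\normTwo{\tilde x}^2$ directly from the definition $\tilde x = x + \sigma\normTwo{x}\,r$ and reduce the whole statement to concentration bounds for two scalar random variables we have already controlled. After disposing of the trivial case $x = 0$ (where both sides of \eqref{eq:tmp1} vanish), for $x \neq 0$ I would write
\[
\normTwo{\tilde x}^2 = \normTwo{x}^2 + 2\sigma\normTwo{x}\,\dotp{x}{r} + \sigma^2\normTwo{x}^2\normTwo{r}^2 = \normTwo{x}^2\bracket{1 + 2\sigma g + \sigma^2\normTwo{r}^2},
\]
where $g := \dotp{x}{r}/\normTwo{x}$. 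The key observation is that $g$ is a linear functional of the standard Gaussian vector $r$ with unit-norm coefficient vector $x/\normTwo{x}$, hence $g \sim N(0,1)$; so everything now hinges on simultaneously controlling $g$ and $\normTwo{r}^2$.

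Next I would invoke the two tools already in hand. Applying \autoref{lem:maxGaussBound} with $n = 1$ (equivalently, the plain Gaussian tail bound together with symmetry) gives $\abs{g} \le t$ with probability at least $1 - 2e^{-t^2/2}$, and \autoref{lem:twoNormGaussBounnd} gives $n(1-\epsilon) \le \normTwo{r}^2 \le n(1+\epsilon)$ with probability at least $1 - 2e^{-n\epsilon^2/12}$. A union bound places us on the intersection of these two events, which holds with probability at least $1 - 2e^{-n\epsilon^2/12} - 2e^{-t^2/2}$, exactly the claimed probability. Note that $g$ and $\normTwo{r}^2$ are \emph{not} independent, but no independence is needed — the union bound is all that is used.

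On that event I would bound $1 + 2\sigma g + \sigma^2\normTwo{r}^2$ above by $1 + 2\sigma t + \sigma^2 n(1+\epsilon) = (1+\sigma^2 n)\bracket{1 + \noisyVectorError}$ and below by $1 - 2\sigma t + \sigma^2 n(1-\epsilon) = (1+\sigma^2 n)\bracket{1 - \noisyVectorError}$, then multiply through by $\normTwo{x}^2$ to obtain \eqref{eq:tmp1}. The hypothesis $2\sigma t < 1$ (with $\epsilon < 1$) is precisely what makes the lower factor $1 - \noisyVectorError$ positive, so that the lower bound is non-vacuous. There is no serious obstacle here; the only point requiring care is recognizing $\dotp{x}{r}$ as $\normTwo{x}$ times a standard normal and resisting the temptation to treat $g$ and $\normTwo{r}^2$ as independent when combining their tail estimates.
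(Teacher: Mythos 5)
Your proposal is correct and follows essentially the same route as the paper: expand $\normTwo{\tilde x}^2$, recognize the cross term as $\normTwo{x}^2$ times a standard normal, and combine \autoref{lem:maxGaussBound} (with $n=1$) and \autoref{lem:twoNormGaussBounnd} via a union bound. Your added remarks — handling $x=0$, noting that no independence between the two events is needed, and observing that $2\sigma t<1$ keeps the lower factor positive — are sound refinements of the same argument.
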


\begin{proof}
We have that 
\begin{equation}
    \normTwo{\tilde{x}}^2 = \normTwo{x}^2 + \sigma^2 \normTwo{x}^2 \normTwo{r}^2 + 2\sigma \normTwo{x}^2 \innerProduct{\frac{x}{\normTwo{x}}}{r},
\end{equation}
where $r$ has i.i.d. $N(0,1)$ entries and $\innerProduct{\frac{x}{\normTwo{x}}}{r}$ is distributed as $N(0,1)$. Therefore, applying \autoref{lem:maxGaussBound} (with $n=1$) and \autoref{lem:twoNormGaussBounnd} with the union bound gives the conclusion. 
\end{proof}

\begin{remark}
$2\sigma t < 1$ typically holds because $\sigma$ is typically much less than $1$ while $t$ can be taken as an $\mathO{1}$ constant.
\end{remark}

\subsection{A bound of non-uniformity $\nu$ for a noisy high-dimensional vector}

In this subsection, we state and prove a key result concerning the non-uniformity $\nu$ of a noisy vector $\tilde{x}$.

\begin{lemma}\label{lem:lInftoltwo}
Let $x \in \R^n$, and $\tilde{x}$ be defined in \eqref{eqn:noisyX}. Then with probability at least $1 - 2e^{-\frac{n\epsilon^2}{12}} - 2e^{-t^2/2} - \bracket{\frac{1}{2n}}^{C_1 -1}$, we have that simultaneously, \eqref{eq:tmp1} holds and 
\begin{equation}
    \nu(\tilde{x}) \leq \frac{\nu(x) + \sigma \sqrt{2C_1 \log(2n) }}{ \sqrt{ \bracket{1 + \sigma^2 n} \bracket{ 1 - \frac{\epsilon \sigma^2 n + 2 \sigma t}{1 + \sigma^2 n} }}}, \label{eq:nuxHashingUpper}
\end{equation}  
where $\nu$ is defined in \autoref{def:non-uniform-vector}.

\end{lemma}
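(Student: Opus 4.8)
The plan is to bound $\nu(\tilde x)=\|\tilde x\|_\infty/\|\tilde x\|_2$ by controlling the numerator from above and the denominator from below, both in terms of the single Gaussian vector $r$ appearing in \eqref{eqn:noisyX}, and then to close with a union bound over three high-probability events. No independence between $S$ and anything else is used here; the whole statement is about $x$ and $r$ only.

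For the denominator I would simply invoke \autoref{lem:noisyvecNorm}: on the event $\mathcal{E}_1$ that $n(1-\epsilon)\le\|r\|_2^2\le n(1+\epsilon)$ and the event $\mathcal{E}_2$ that $\big|\langle x/\|x\|_2,\,r\rangle\big|\le t$, which together have probability at least $1-2e^{-n\epsilon^2/12}-2e^{-t^2/2}$, the bound \eqref{eq:tmp1} holds, and in particular
\[
\|\tilde x\|_2^2 \;\ge\; (1+\sigma^2 n)\Big(1-\tfrac{\epsilon\sigma^2 n+2\sigma t}{1+\sigma^2 n}\Big)\|x\|_2^2 ,
\]
where the right-hand side is strictly positive because $2\sigma t<1$ together with $\epsilon<1$ gives $\epsilon\sigma^2 n+2\sigma t<1+(1-\epsilon)\sigma^2 n+\,$(slack) $<1+\sigma^2 n$, so the square root in \eqref{eq:nuxHashingUpper} is well defined.

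For the numerator I would use the triangle inequality $\|\tilde x\|_\infty=\|x+\sigma\|x\|_2 r\|_\infty\le\|x\|_\infty+\sigma\|x\|_2\|r\|_\infty$, and then apply \autoref{lem:maxGaussBound} to the standard normal entries of $r$ (so its constant is $1$): choosing $t=\sqrt{2C_1\log(2n)}$ there gives $\|r\|_\infty=Z_n\le\sqrt{2C_1\log(2n)}$ on an event $\mathcal{E}_3$ of probability at least $1-(1/(2n))^{C_1-1}$, hence $\|\tilde x\|_\infty\le\|x\|_\infty+\sigma\|x\|_2\sqrt{2C_1\log(2n)}$ on $\mathcal{E}_3$.

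Finally I would take the union bound over the complements of $\mathcal{E}_1,\mathcal{E}_2,\mathcal{E}_3$, so that on $\mathcal{E}_1\cap\mathcal{E}_2\cap\mathcal{E}_3$ — which has probability at least $1-2e^{-n\epsilon^2/12}-2e^{-t^2/2}-(1/(2n))^{C_1-1}$ — both \eqref{eq:tmp1} and \eqref{eq:nuxHashingUpper} hold simultaneously; the latter follows by dividing the numerator bound by the (positive) square root lower bound on $\|\tilde x\|_2$ and recognizing $\|x\|_\infty/\|x\|_2=\nu(x)$. I do not anticipate a genuine obstacle: the only points needing care are that all three events are statements about the \emph{same} vector $r$ (so they combine by a plain union bound), that one must check the denominator is positive before taking its square root, and the bookkeeping that the event $\mathcal{E}_2$ already counted inside \autoref{lem:noisyvecNorm} is not the same as $\mathcal{E}_3$, so the three failure probabilities genuinely add.
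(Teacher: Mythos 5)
Your proposal is correct and follows essentially the same route as the paper: triangle inequality on $\|\tilde x\|_\infty$, the lower bound on $\|\tilde x\|_2^2$ from \autoref{lem:noisyvecNorm}, the bound on $\|r\|_\infty$ from \autoref{lem:maxGaussBound} with $t=\sqrt{2C_1\log(2n)}$, and a union bound over the three events. Your explicit check that $2\sigma t<1$ and $\epsilon<1$ make the denominator positive is a detail the paper leaves implicit, but it is the same argument.
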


\begin{proof}
We have that, with $x \in \R^n$, $r$ be a $n$-dimensional Gaussian vector with entries independent $N(0,1)$, $\sigma > 0$, 

\begin{align*}
    \nu(\tilde{x})^2 & = \frac{ \normInf{x + \sigma \normTwo{x} r}^2 }{ \normTwo{\tilde{x}}^2} \\
                & \leq \frac{\bracket{\normInf{x} + \sigma \normTwo{x}\normInf{r}}^2}{\normTwo{\tilde{x}}^2} \\
                & \leq \frac{ \bracket{\nu(x) + \sigma \normInf{r}}^2}{ \normTwo{\tilde{x}}^2 / \normTwo{x}^2}.
\end{align*}
Applying \autoref{lem:noisyvecNorm} and \autoref{lem:maxGaussBound} with the union bound gives the desired result. 
\end{proof}

\subsection{Proof of \autoref{thm:hashing}}
The following technical lemma by Frenksen, Kamma, and Larsen is crucial to our result. Its proof can be found in their paper \cite{10.5555/3327345.3327444}.

\begin{lemma}[\cite{10.5555/3327345.3327444}, Theorem 2] \label{Freksen}
Suppose that $\epsilonS, \deltaS \in (0,1)$, and $E$ satisfies $C \leq E < \frac{2}{\deltaS \log(1/\deltaS)}$, where $C>0$ and $C_1$ are  problem-independent constants. Let 
$m \leq n \in \N$ with
$m\geq E \epsilonS^{-2} \log(1/\deltaS)$.

Then,  for any $x\in \R^n$ with 
\begin{equation}
    \nu(x) \leq\bar{\nu}(\epsilonS,\deltaS),
    \label{L5:nu-1}
\end{equation}
where $\bar{\nu}(\epsilonS,\deltaS)$ is defined in \eqref{L5:nu},
a randomly generated 1-hashing matrix $S\in \R^{m\times n}$ is an $\epsilonS$-JL embedding for $\{x\}$ with probability at least $1-\deltaS$.

\end{lemma}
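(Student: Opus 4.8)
The stated lemma is a sharp concentration bound for feature hashing, and since its proof is in \cite{10.5555/3327345.3327444} I will only indicate the route I would take. The plan is to control the quadratic fluctuation of $\normTwo{Sx}^2$ about $\normTwo{x}^2$ by a two-regime (Bernstein/Hanson--Wright-type) tail estimate, arranged so that the two branches of the minimum defining $\nuBar$ in \eqref{L5:nu} arise from its two regimes. First I would normalise, taking $\normTwo{x}=1$ without loss of generality, so that hypothesis \eqref{L5:nu-1} becomes $\normInf{x}\leq \nuBar(\epS,\deltaS)$; write $\mu \deq \normInf{x}$. Writing the $1$-hashing action as $(Sx)_i = \sum_j \sigma_j x_j \mathbf{1}[h(j)=i]$, with $h:[n]\to[m]$ the uniform hash and $\sigma_j\in\set{\pm 1}$ independent signs, a direct expansion gives
\begin{equation}
Z \deq \normTwo{Sx}^2 - 1 = \sum_{j \neq k} \sigma_j \sigma_k\, x_j x_k\, \mathbf{1}[h(j)=h(k)],
\end{equation}
because the diagonal contributes exactly $\normTwo{x}^2 = 1$ and $\E[Z]=0$. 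The target is then the tail bound $\probability{\abs{Z}\geq \epS}\leq \deltaS$.

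The conceptual backbone is to condition on the hash $h$. Conditionally, $Z = \sigma^{\top} A \sigma$ is a mean-zero Rademacher chaos of order two in the signs, where $A$ is the hollow matrix $A_{jk} = x_j x_k \mathbf{1}[h(j)=h(k)]$ for $j\neq k$. A Hanson--Wright-type inequality for Rademacher chaos then yields, conditionally on $h$,
\begin{equation}
\probability{\abs{Z}\geq \epS \,\middle|\, h} \leq 2\exp\bracket{-c\,\min\set{\frac{\epS^2}{\lVert A\rVert_F^2},\ \frac{\epS}{\lVert A\rVert_{\mathrm{op}}}}},
\end{equation}
for an absolute constant $c>0$. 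I would then bound the two matrix norms as functions of the random $h$: since a pair $j\neq k$ collides with probability $1/m$, one has $\E_h \lVert A\rVert_F^2 = \tfrac{1}{m}\sum_{j\neq k} x_j^2 x_k^2 \leq 1/m$, which drives the sub-Gaussian branch with exponent of order $\epS^2 m$; while $\lVert A\rVert_{\mathrm{op}}$ is controlled by $\mu$ and the maximum weighted bucket load $\max_i \sum_{j:h(j)=i} x_j^2$, which drives the sub-exponential branch.

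To finish, I would take $p$ of order $\logOneOverDeltaS$ in the underlying moment bound (equivalently, integrate the conditional tail against the distribution of $h$) and verify that both exponents exceed $\logOneOverDeltaS$ under the hypotheses $m \geq E\epS^{-2}\logOneOverDeltaS$ and $\mu \leq \nuBar(\epS,\deltaS)$. The sub-Gaussian branch uses only $m \geq E \epS^{-2}\logOneOverDeltaS$; the sub-exponential branch is precisely what converts into the constraint $\mu \leq \nuBar$, and the two expressions inside the minimum in \eqref{L5:nu} correspond to the two ways (according to the relative sizes of $\logMe{E/\epS}$ and $\logOneOverDeltaS$) in which the bucket-load term can be dominated.

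The main obstacle is the sharp control of $\lVert A\rVert_{\mathrm{op}}$ and its fluctuation over $h$: a black-box Hanson--Wright inequality gives the correct two-regime \emph{shape} but not the tight constants, so matching the exact threshold $\nuBar$ with the stated min-structure requires the careful direct moment computation of \cite{10.5555/3327345.3327444} --- expanding $\E[Z^p]$, taking the sign expectation (which kills every monomial with an index of odd multiplicity), taking the hash expectation (each enforced collision costing a factor $1/m$), and organising the resulting combinatorial sum by the connected-component structure of the multigraph of chosen index-pairs so that the dominant contribution can be bounded tightly in $\mu$, $1/m$ and $p$.
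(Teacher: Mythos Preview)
The paper does not prove this lemma: it is stated as a cited result from Freksen, Kamma and Larsen, accompanied only by the sentence ``Its proof can be found in their paper \cite{10.5555/3327345.3327444}.'' There is therefore nothing in the present paper to compare your argument against, and your proposal already goes further than the paper itself.

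That said, your sketch is consistent with the actual argument in the cited work. The route you describe in your final paragraph --- direct computation of $\E[Z^p]$ for $Z=\normTwo{Sx}^2-1$, taking the sign expectation to kill monomials with any index of odd multiplicity, taking the hash expectation to pay a factor $1/m$ per enforced collision, and organising the surviving sum by the connected-component structure of the induced multigraph --- is precisely how Freksen--Kamma--Larsen obtain the tight threshold. The conditional Hanson--Wright framing you present first is a useful heuristic for seeing why the tail has a two-regime (sub-Gaussian/sub-exponential) shape and hence why $\nuBar$ involves a minimum, but, as you correctly flag, it does not by itself deliver the sharp constants in \eqref{L5:nu}; the cited proof does not literally proceed via a black-box Hanson--Wright step.
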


We are ready to prove \autoref{thm:hashing}.

\begin{proof}[Proof of \autoref{thm:hashing}]
Let $x \in \R^n$.
Using \autoref{lem:lInftoltwo}, $n \geq n_0$, we have that with probability at least $\noisyVectorOverallProb$, $\nu(\tilde{x}) \leq \nuBar$, where $\nuBar$ is defined in \eqref{L5:nu}, while simulatneoulsy \eqref{eq:tmp1} holds. 
Conditioning on $\nu(\tilde{x}) \leq \nuBar$, \autoref{Freksen} gives that with probability at least $1-\delta$, a randomly generated 1-hashing matrix $S\in \R^{m\times n}$ is an $\epsilonS$-JL embedding for $\{\tilde{x}\}$.
Since the randomness in $x$ and the randomness in $S$ are independent, we have that with a probability at least 
$\bracket{\noisyVectorOverallProb} \bracket{ 1 - \deltaS}$, a randomly generated 1-hashing matrix $S\in \R^{m\times n}$ is an $\epsilonS$-JL embedding for $\{\tilde{x}\}$ while \eqref{eq:tmp1} holds. Applying \eqref{eq:tmp1} gives the embedding result for $\{ x\}$.
\end{proof}

\subsection{Proof of \autoref{thm:sampling}}
Similar to the proof of the hashing main result, we make use of a non-uniformity-dependent result on subsampling.

\begin{lemma}[\cite{BCGNPaper}]\label{lem:BCGNSampling} \label{lem:sampling:non-uniformity-BCGN}
Let $x \in \R^n$, $\epS, \deltaS \in (0,1)$.
Let $S \in \R^{m\times n}$ be a scaled sampling matrix with $m = 2n \nu^2 \epS^{-2} \logFrac{1}{\deltaS}$
for some $\nu \in (0,1]$.
Then $S$ is an $(\epS, \deltaS)$-JL embedding for $x$ with probability at least $1-\deltaS$. 
\end{lemma}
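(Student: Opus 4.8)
The plan is to prove this by a direct Bernstein‑type concentration argument on the sum of the per‑row contributions, without any appeal to the Gaussian JL Lemma. First I would make the action of $S$ explicit: writing $j_i \in [n]$ for the (independent, uniform) column chosen in row $i \in [m]$,
\[
\normTwo{Sx}^2 \;=\; \frac{n}{m}\sum_{i=1}^{m} x_{j_i}^2 \;=\; \sum_{i=1}^{m} Z_i, \qquad Z_i \deq \frac{n}{m}\, x_{j_i}^2 ,
\]
so the $Z_i$ are independent and nonnegative, are bounded above by $b \deq \tfrac{n}{m}\normInf{x}^2 = \tfrac{n\,\nu(x)^2}{m}\normTwo{x}^2$, and have common mean $\E[Z_i] = \tfrac{n}{m}\cdot\tfrac{\normTwo{x}^2}{n} = \tfrac{\normTwo{x}^2}{m}$. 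Hence $\E\bracket{\normTwo{Sx}^2} = \normTwo{x}^2$, so the estimator is unbiased and the claim reduces to showing $\probability{\abs{\sum_i (Z_i - \E Z_i)} > \epS \normTwo{x}^2} \leq \deltaS$.

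The second step is the variance bound. From $\E[Z_i^2] = \tfrac{n}{m^2}\sum_{k=1}^{n} x_k^4 \leq \tfrac{n}{m^2}\normInf{x}^2\normTwo{x}^2$ we get $\sum_{i=1}^m \mathrm{Var}(Z_i) \leq \tfrac{n}{m}\normInf{x}^2\normTwo{x}^2 = \tfrac{n\,\nu(x)^2}{m}\normTwo{x}^4 \leq \tfrac{n\nu^2}{m}\normTwo{x}^4$, where the last step reads the statement as requiring $\nu \geq \nu(x)$ (equivalently one may just take $\nu = \nu(x)$, for which the bound is tight). Now I would apply Bernstein's inequality to $\sum_i (Z_i - \E Z_i)$ with deviation $t = \epS \normTwo{x}^2$: the exponent is of order $\dfrac{t^2/2}{\sum_i \mathrm{Var}(Z_i) + bt/3}$, and substituting the two bounds the variance term contributes order $\tfrac{\epS^2 m}{n\nu^2}$ while the linear term $bt/3 \leq \tfrac{\epS}{3}\cdot\tfrac{n\nu^2}{m}\normTwo{x}^4$ is dominated by it since $\epS < 1$. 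So the exponent is $\gtrsim \tfrac{\epS^2 m}{n\nu^2}$; choosing $m$ proportional to $n\nu^2\epS^{-2}\logFrac{1}{\deltaS}$, with the implied constant taken as in \cite{BCGNPaper}, makes each of the two tails have probability $\leq \deltaS$, and a union bound over the lower and upper tails yields \eqref{JL} for the singleton $\{x\}$ with probability at least $1 - \deltaS$.

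The hard part is obtaining the right dependence $m \sim n\nu^2$. A crude bound that uses only the range $b$ of the $Z_i$ (Hoeffding, bounded differences) would give $m \sim n^2\nu^4$, far too large; the argument genuinely needs the second‑moment estimate $\sum_k x_k^4 \leq \normInf{x}^2\normTwo{x}^2$ together with a variance‑aware (Bernstein) inequality. Everything else is bookkeeping of absolute constants — the constant in Bernstein, the factor $2$ from the two‑sided union bound, and $\logFrac{2}{\deltaS}$ versus $\logFrac{1}{\deltaS}$ — all of which are absorbed into the constant multiplying $n\nu^2\epS^{-2}\logFrac{1}{\deltaS}$; since the lemma is quoted from \cite{BCGNPaper}, one may instead simply cite that reference and skip the constant‑chasing.
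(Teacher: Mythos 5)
The paper does not actually prove this lemma --- it is imported wholesale from \cite{BCGNPaper} --- so your self-contained argument is necessarily a ``different route,'' and it is the right one. The decomposition $\normTwo{Sx}^2=\sum_i Z_i$ with $Z_i=\tfrac{n}{m}x_{j_i}^2$, the unbiasedness computation, and above all the second-moment estimate $\sum_k x_k^4\leq\normInf{x}^2\normTwo{x}^2$ feeding a variance-aware (Bernstein/multiplicative-Chernoff) bound are exactly the ingredients that yield the $m\sim n\nu^2\epS^{-2}\logFrac{1}{\deltaS}$ scaling; your remark that a range-only Hoeffding bound would give the useless $n^2\nu^4$ dependence correctly identifies where the content lies. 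Your reading of the hypothesis as $\nu\geq\nu(x)$ is also the intended one (and the one the paper relies on when it instantiates $\nu$ with the upper bound on $\nu(\tilde{x})$ from \autoref{lem:lInftoltwo}). The only soft spot is the one you flag yourself: with the constant exactly $2$ and $\logFrac{1}{\deltaS}$ rather than $\logFrac{2}{\deltaS}$, the two-sided union bound does not quite close at level $\deltaS$ --- the lower tail alone saturates $\exp\bigl(-\epS^2 m/(2n\nu^2)\bigr)=\deltaS$, and the upper tail plus the factor $2$ push you to roughly $2\deltaS^{2/3}$ --- so either the implied constant must be enlarged slightly or the failure probability restated; since the lemma is quoted verbatim from \cite{BCGNPaper}, deferring that bookkeeping to the reference is acceptable, but worth stating explicitly.
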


\begin{proof}[Proof of Theorem \ref{thm:sampling}]
Let $x \in \R^n$.
Using \autoref{lem:lInftoltwo}, we have that with probability at least $\noisyVectorOverallProb$, \eqref{eq:tmp1} and \eqref{eq:nuxHashingUpper} hold.
Conditioning on the above event, namely, \eqref{eq:tmp1} and \eqref{eq:nuxHashingUpper} hold, \autoref{lem:BCGNSampling} gives that with probability at least $1-\deltaS$, a randomly generated sampling matrix $S\in \R^{m\times n}$ is an $\epsilonS$-JL embedding for $\{\tilde{x}\}$.
Since the randomness in $x$ and the randomness in $S$ are independent, we have that with a probability at least 
$\bracket{\noisyVectorOverallProb}\bracket{1 - \deltaS}$, a randomly generated sampling matrix $S\in \R^{m\times n}$ is an $\epsilonS$-JL embedding for $\{\tilde{x}\}$. Using \eqref{eq:tmp1} gives the embedding result for $\{ x\}$.
\end{proof}

\section{Discussion and conclusion}

We see that the presence of the noise modifies the data in such a way that the non-uniformity is reduced, thus making it more suitable for sparse embeddings such as subsampling and hashing. Although naturally concerns may arise on the question of how much original signal we recover, \autoref{lem:noisyvecNorm} answers this question. We see that the presence of the noise does modify the norm of the origianl signal, but due to properties of a high dimensional Gaussian vector, the modification is highly predictable we approximately recover the square norm of the original signal, after a multiplicative factor $1 + \sigma^2 n$. This means that in the presence of the specific noise considered in this paper, dimensionality reduction that approximately preserves the norm becomes easier than without noise. Why? It turns out that the key is the noise contains information about our target, namely, the norm of the original vector. Therefore in this situation the noise is not only removable, but helpful. \autoref{fig::no_noise} and \autoref{fig::noise} illustrate using subsampling for approximate norm-preserving embeddings without or with noise, when the origianl vector has high non-uniformities ($x \in \R^1000$ with three non-zero entries of equal magnitude.) The contrast is remarkable. 

		\begin{figure}[]
		    \centering
		    \begin{minipage}{0.48\textwidth}
		        \centering
		        \includegraphics[width=\textwidth]{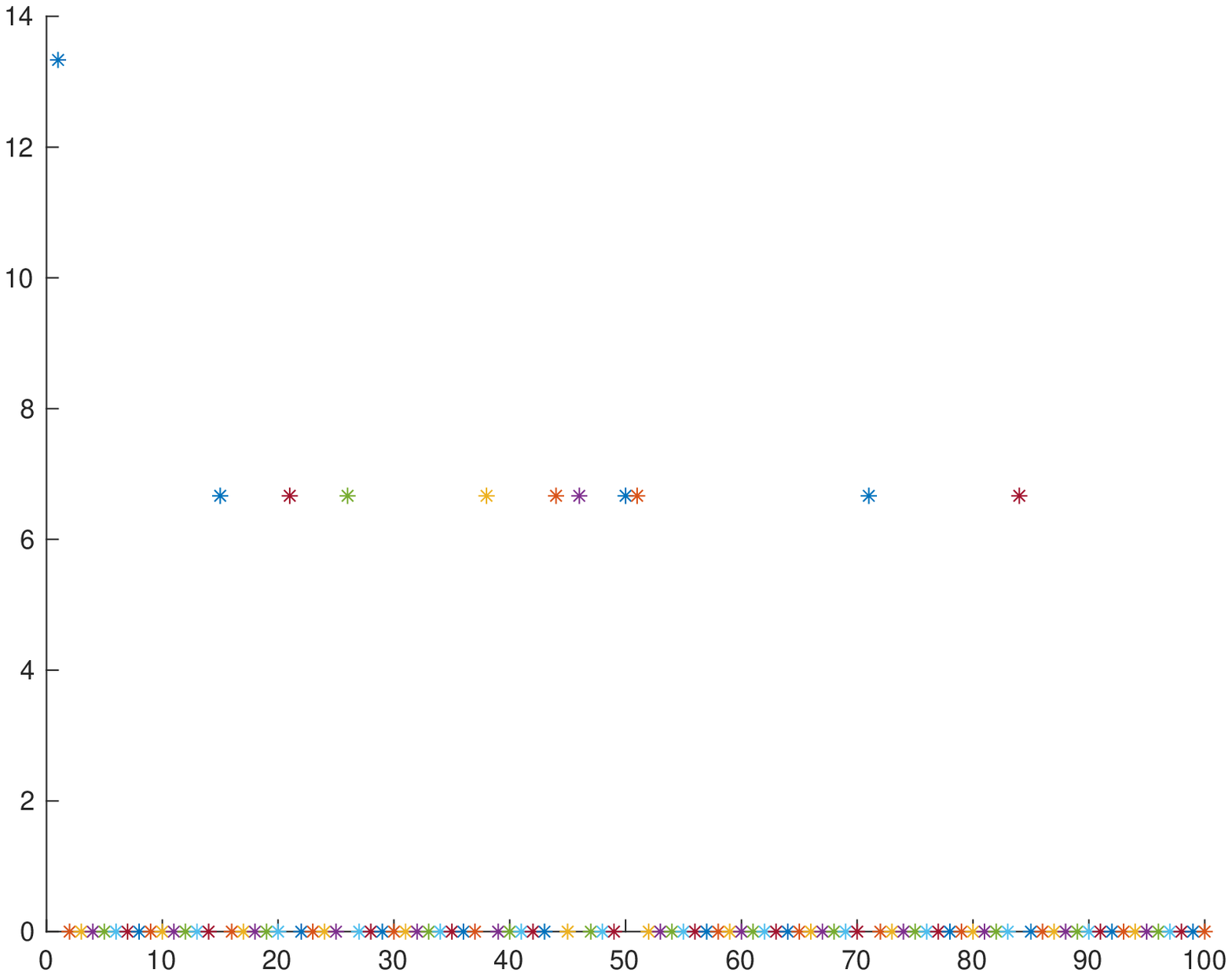} 
		        \caption{Scatter plot of $\normTwo{Sx}^2/\normTwo{x}^2$ over $100$ independent experiments where $S$ is a scaled sampling matrix. We see that without noise, subsampling fails to generate an approximate norm preserving embedding.}
		        \label{fig::no_noise}
		    \end{minipage}\hfill
		    \begin{minipage}{0.48\textwidth}
		        \centering
		        \includegraphics[width=\textwidth]{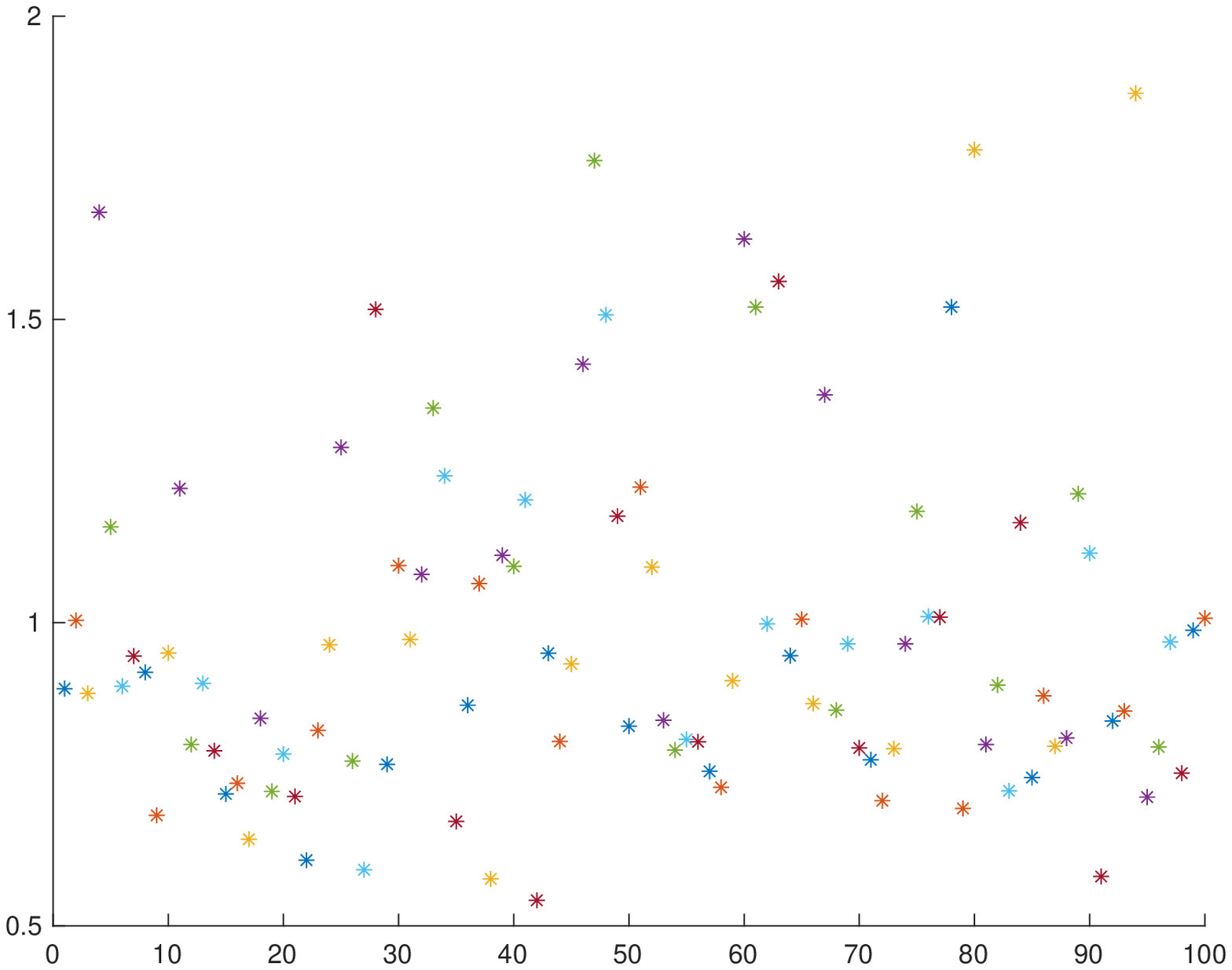} 
		        \caption{Scatter plot of $\normTwo{S\tilde{x}}^2/(\sigma^2 n \normTwo{x}^2)$ over $100$ independent experiments where $S$ is a scaled sampling matrix, $\sigma = 0.1, n=1000$. We see that with noise, subsampling succeeds to generate an approximate norm preserving embedding.}
		        \label{fig::noise}
		    \end{minipage}   
		\end{figure}

\paragraph{Discussion on \autoref{thm:hashing}} \eqref{eqn:n0} is satisfied for $n = \Omega \bracket{\frac{1}{\nuBar \sigma^2}}$, assuming $n = o \bracket{e^{\frac{1}{\sigma^2}}}$. For $n = \mathO{e^{\frac{1}{\sigma^2}}}$, \eqref{eqn:n0} is also satisfied provided that $e^{\frac{1}{\sigma^2}}  = \Omega \bracket{\frac{1}{\nuBar}}$, which holds for small $\sigma$. Therefore with very weak assumptions, $1$-hashing matrices in a noisy data setting have comparable embedding properties with Gaussian matrices, namely, the $\mathO{\epS^{-2} \logFrac{1}{\deltaS}}$ embedding dimension, but superior computational properties for $n = \Omega\bracket{\frac{1}{\nuBar \sigma^2}}$. Note that regardless of relative magnitude of $\sigma$ and $\nuBar$, \eqref{eqn:n0} always hold as $n \to \infty$.

\paragraph{Discussion on \autoref{thm:sampling}} With $n = \mathO{\frac{1}{\sigma^2}}$, we have from \autoref{thm:sampling} that \\ $m = \mathO{\epS^{-2} \logFrac{1}{\deltaS} \max \bracket{\frac{\nu(x)^2}{\sigma^2}, \log(n)} }$, the first term in the maximum is proportional to the required embedding dimension of sampling matrices for data without noise (see \autoref{lem:BCGNSampling}). As $n \to \infty$, we have $m = \mathO{\epS^{-2} \logFrac{1}{\deltaS} \log(n)}$. Thus for data in sufficiently high dimensions, subsampling requires an additional $\log(n)$ factor in terms of the embedding dimension compared with Gaussian matrices. The $\log(n)$ factor is expected due to the coupon collector's problem. 

Further questions remain. What are the embedding results for other noisy models, for example, Rademacher instead of Gaussian noises, or sparse noises? The author thinks, however, that the analysis in this paper could be easily extended to cover all the sub-Gaussian type of noise. Another direction that the author intends to take is to analyze subspace-embeddings, where not only one/a finite set of vectors are projected with norms approximately preserved, but vectors in an entire column space of a given matrix are projected with norms approximately preserved. The author already has key proofs for subspace-embedding type results and is currently writing it up for publication. 


In conclusion, in this paper, we propose to consider $JL$-embeddings in a setting where the input is corrupted by a Gaussian noise. We analysed $1$-hashing matrices and subsampling techniques in such a setting, showing that both achieve comparable embedding dimensions with Gaussian matrices when the data dimension is sufficiently high. Such results could be useful to understand performances of embeddings in applications such as in nearest neighbour search, or subspace methods for non-convex optimisatinons. Most importantly, we illustrate that noise is not always to be feared, in specific situations one takes advantage of the noise to achieve one's objective.

\bibliography{Reference.bib}        

\begin{thebibliography}{10}

\bibitem{MR2005771}
D.~Achlioptas.
\newblock Database-friendly random projections.
\newblock In {\em Proceedings of the twentieth ACM SIGMOD-SIGACT-SIGART
  symposium on Principles of database systems}, pages 274--281, 2001.

\bibitem{10.1145/1132516.1132597}
N.~Ailon and B.~Chazelle.
\newblock Approximate nearest neighbors and the fast {J}ohnson-{L}indenstrauss
  transform.
\newblock In {\em S{TOC}'06: {P}roceedings of the 38th {A}nnual {ACM}
  {S}ymposium on {T}heory of {C}omputing}, pages 557--563. ACM, New York, 2006.

\bibitem{arthur2011smoothed}
D.~Arthur, B.~Manthey, and H.~R{\"o}glin.
\newblock Smoothed analysis of the k-means method.
\newblock {\em Journal of the ACM (JACM)}, 58(5):1--31, 2011.

\bibitem{bhaskara2014smoothed}
A.~Bhaskara, M.~Charikar, A.~Moitra, and A.~Vijayaraghavan.
\newblock Smoothed analysis of tensor decompositions.
\newblock In {\em Proceedings of the forty-sixth annual ACM symposium on Theory
  of computing}, pages 594--603, 2014.

\bibitem{2021arXiv210511815C}
C.~{Cartis}, J.~{Fiala}, and Z.~{Shao}.
\newblock {Hashing embeddings of optimal dimension, with applications to linear
  least squares}.
\newblock {\em arXiv e-prints}, page arXiv:2105.11815, May 2021.

\bibitem{BCGNPaper}
C.~{Cartis}, J.~{Fiala}, and Z.~{Shao}.
\newblock {Randomised subspace methods for non-convex optimization, with
  applications to nonlinear least-squares}.
\newblock {\em arXiv e-prints, in preparation}, 2022.

\bibitem{10.1145/3019134}
K.~L. Clarkson and D.~P. Woodruff.
\newblock Low-rank approximation and regression in input sparsity time.
\newblock {\em J. ACM}, 63(6):Art. 54, 1--45, 2017.

\bibitem{MR3773205}
M.~B. Cohen, T.~S. Jayram, and J.~Nelson.
\newblock Simple analyses of the sparse {J}ohnson-{L}indenstrauss transform.
\newblock In {\em 1st {S}ymposium on {S}implicity in {A}lgorithms}, volume~61
  of {\em OASIcs OpenAccess Ser. Inform.}, pages Art. No. 15, 9. Schloss
  Dagstuhl. Leibniz-Zent. Inform., Wadern, 2018.

\bibitem{10.5555/3295222.3295407}
S.~Dahlgaard, M.~B.~T. Knudsen, and M.~Thorup.
\newblock Practical hash functions for similarity estimation and dimensionality
  reduction.
\newblock In {\em Proceedings of the 31st International Conference on Neural
  Information Processing Systems}, NIPS'17, page 6618–6628, Red Hook, NY,
  USA, 2017. Curran Associates Inc.

\bibitem{MR1943859}
S.~Dasgupta and A.~Gupta.
\newblock An elementary proof of a theorem of {J}ohnson and {L}indenstrauss.
\newblock {\em Random Structures Algorithms}, 22(1):60--65, 2003.

\bibitem{10.5555/1109557.1109682}
P.~Drineas, M.~W. Mahoney, and S.~Muthukrishnan.
\newblock Sampling algorithms for l2 regression and applications.
\newblock In {\em Proceedings of the Seventeenth Annual ACM-SIAM Symposium on
  Discrete Algorithm}, SODA '06, page 1127–1136, USA, 2006. Society for
  Industrial and Applied Mathematics.

\bibitem{etscheid2017smoothed}
M.~Etscheid and H.~R{\"o}glin.
\newblock Smoothed analysis of local search for the maximum-cut problem.
\newblock {\em ACM Transactions on Algorithms (TALG)}, 13(2):1--12, 2017.

\bibitem{10.5555/3327345.3327444}
C.~Freksen, L.~Kamma, and K.~G. Larsen.
\newblock Fully understanding the hashing trick.
\newblock In {\em Proceedings of the 32nd International Conference on Neural
  Information Processing Systems}, NIPS'18, pages 5394--5404, Red Hook, NY,
  USA, 2018. Curran Associates Inc.

\bibitem{MR1715608}
P.~Indyk and R.~Motwani.
\newblock Approximate nearest neighbors: towards removing the curse of
  dimensionality.
\newblock In {\em S{TOC} '98 ({D}allas, {TX})}, pages 604--613. ACM, New York,
  1999.

\bibitem{NIPS2019_9656}
M.~Jagadeesan.
\newblock Understanding sparse {JL} for feature hashing.
\newblock In {\em Advances in Neural Information Processing Systems},
  volume~32. Curran Associates, Inc., 2019.

\bibitem{MR3073520}
T.~S. Jayram and D.~P. Woodruff.
\newblock Optimal bounds for {J}ohnson-{L}indenstrauss transforms and streaming
  problems with subconstant error.
\newblock {\em ACM Trans. Algorithms}, 9(3):Art. 26, 17, 2013.

\bibitem{Johnson:1984aa}
W.~B. Johnson and J.~Lindenstrauss.
\newblock Extensions of {L}ipschitz mappings into a {H}ilbert space.
\newblock In {\em Conference in modern analysis and probability ({N}ew {H}aven,
  {C}onn., 1982)}, volume~26 of {\em Contemp. Math.}, pages 189--206. Amer.
  Math. Soc., Providence, RI, 1984.

\bibitem{MR3167920}
D.~M. Kane and J.~Nelson.
\newblock Sparser {J}ohnson-{L}indenstrauss transforms.
\newblock {\em J. ACM}, 61(1):Art. 4, 23, 2014.

\bibitem{MR3734267}
K.~G. Larsen and J.~Nelson.
\newblock Optimality of the {J}ohnson-{L}indenstrauss lemma.
\newblock In {\em 58th {A}nnual {IEEE} {S}ymposium on {F}oundations of
  {C}omputer {S}cience---{FOCS} 2017}, pages 633--638. IEEE Computer Soc., Los
  Alamitos, CA, 2017.

\bibitem{10.1145/2488608.2488622}
J.~Nelson and H.~L. Nguyen.
\newblock Sparsity lower bounds for dimensionality reducing maps.
\newblock In {\em S{TOC}'13---{P}roceedings of the 2013 {ACM} {S}ymposium on
  {T}heory of {C}omputing}, pages 101--110. ACM, New York, 2013.

\bibitem{Nelson:2014uu}
J.~Nelson and H.~L. Nguyen.
\newblock Lower bounds for oblivious subspace embeddings.
\newblock In {\em Automata, languages, and programming. {P}art {I}}, volume
  8572 of {\em Lecture Notes in Comput. Sci.}, pages 883--894. Springer,
  Heidelberg, 2014.

\bibitem{10.1109/FOCS.2006.37}
T.~{Sarlos}.
\newblock Improved approximation algorithms for large matrices via random
  projections.
\newblock In {\em 2006 47th Annual IEEE Symposium on Foundations of Computer
  Science (FOCS'06)}, pages 143--152, 2006.

\bibitem{ZhenThesis}
Z.~{Shao}.
\newblock {On random embeddings and their application to optimisation}.
\newblock {\em arXiv e-prints}, page arXiv:2206.03371, June 2022.

\bibitem{MR2120328}
D.~Spielman and S.-H. Teng.
\newblock Smoothed analysis of algorithms: why the simplex algorithm usually
  takes polynomial time.
\newblock In {\em Proceedings of the {T}hirty-{T}hird {A}nnual {ACM}
  {S}ymposium on {T}heory of {C}omputing}, pages 296--305. ACM, New York, 2001.

\bibitem{MR3837109}
R.~Vershynin.
\newblock {\em High-dimensional probability}, volume~47 of {\em Cambridge
  Series in Statistical and Probabilistic Mathematics}.
\newblock Cambridge University Press, Cambridge, 2018.
\newblock An introduction with applications in data science, With a foreword by
  Sara van de Geer.

\bibitem{10.1145/1553374.1553516}
K.~Weinberger, A.~Dasgupta, J.~Langford, A.~Smola, and J.~Attenberg.
\newblock Feature hashing for large scale multitask learning.
\newblock In {\em Proceedings of the 26th Annual International Conference on
  Machine Learning}, ICML '09, page 1113–1120, New York, NY, USA, 2009.
  Association for Computing Machinery.

\bibitem{10.1561/0400000060}
D.~P. Woodruff.
\newblock Sketching as a tool for numerical linear algebra.
\newblock {\em Found. Trends Theor. Comput. Sci.}, 10(1-2):1--157, 2014.

\end{thebibliography}
\bibliographystyle{abbrv}  

\appendix

\section{Numerical experiment on non-uniformity of a noisy vector}
To illustrate that the non-uniformity ($l_{\infty}$ to $l_2$ ratio) of a noisy vector converges to $\mathO{\sqrt{\frac{\log(2n)}{n}}}$ for large $n$, we include a simple MATLAB experiment. The vector $x \in \R^n$ is set to have $1$ in the first entry, and zero otherwise. Such an $x$ has the maximum non-uniformity of $1$. We set $\sigma = 0.01$ and $\tilde{x} = x + \sigma \normTwo{x} r$ where $r$ is $N(0,1)$ as before. We let $n$ be 20 equally spaced grid points in the interval $10^4$ to $10^8$, and \autoref{fig:1} shows the loglog plot of $\nu(\tilde{x})$ versus $\sqrt{\frac{\log(n)}{n}}$. We see that the plot is close to a straightline, and a linear regression indicates that the gradient of the line is very close to 1 (1.02), indicating that $\nu(\tilde{x}$ is approximately proportional to $\sqrt{\frac{\log(n)}{n}}$.

\begin{figure}
    \centering
    \includegraphics[width=\textwidth]{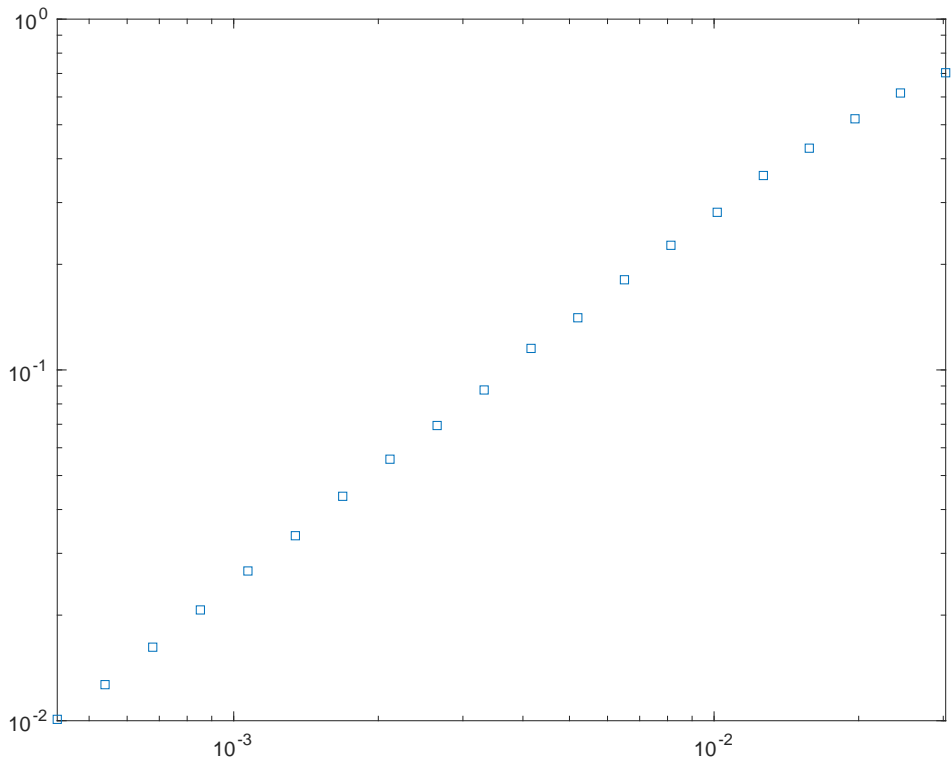}
    \caption{Log-log plot of $\nu(\tilde{x})$ versus $\sqrt{\frac{\log(n)}{n}}$. The data approximately fits a straghtline with unit gradient, indicating a proportionality relationship between the variables.}
    \label{fig:1}
\end{figure}

\end{document}